\newtheorem{fact}[theorem]{Fact}
\title{A Simple Boosting Framework for Transshipment} 
\author{Goran Zuzic}{Google Research, Switzerland \and ETH Z\"urich, Switzerland\and \url{https://goranzuzic.github.io/} }{goranzuzic@google.com}{https://orcid.org/0000-0002-9322-6329}{Work presented in this paper was partially performed while at ETH Z\"urich. Supported in part by NSF grants CCF-1814603, CCF-1910588, NSF CAREER award CCF-1750808, a Sloan Research Fellowship, and funding from the European Research Council (ERC) under the European Union's Horizon 2020 research and innovation program (grant agreement No. 949272).}
\authorrunning{G. Zuzic} 
\keywords{mixed continuous-discrete optimization, boosting, multiplicative weights, theoretical computer science, shortest path, parallel algorithms} 
\newcommand{\norm}[1]{\left\lVert#1\right\rVert}
\newcommand{\poly}{\mathrm{poly}}
\newcommand{\eps}{\varepsilon}
\newcommand{\R}{\mathbb{R}}
\newcommand{\Z}{\mathbb{Z}}
\newcommand{\dist}{\mathrm{dist}}
\newcommand{\polylog}{\mathrm{polylog\ }}
\newcommand\twovec[2]{\begin{bmatrix}#1 \\ #2\end{bmatrix}}
\newcommand{\1}[1]{\mathbb{I}[#1]}
\newcommand{\opt}{\mathrm{OPT}}
\newcommand{\inner}[1]{\left\langle #1 \right\rangle}
\newcommand{\residual}{\ensuremath{_{\mathrm{residual}}}\xspace}
\DeclareMathOperator{\smax}{smax}
\begin{document}

\maketitle

\begin{abstract}
  Transshipment is an important generalization of both the shortest path problem and the optimal transport problem. The task asks to route a demand using a flow of minimum cost over (uncapacitated) edges. Transshipment has recently received extensive attention in theoretical computer science as it is the centerpiece of all modern theoretical breakthroughs in parallel and distributed (approximate) shortest-path computation, a classic and well-studied problem.

  \medskip

  The key advantage of transshipment over shortest paths is the so-called \emph{boosting} property: one can often boost a crude approximate solution to a (near-optimal) $(1+\varepsilon)$-approximate solution. However, our understanding of this phenomenon is limited: it is not clear which approximators can be boosted. Moreover, all current boosting frameworks are built with a specific type of approximator in mind and are relatively complicated.

  \medskip

  The main takeaway of our paper is conceptual: \emph{any} black-box oracle that computes an approximate \emph{dual} solution can be boosted to an $(1+\varepsilon)$-approximator. This decouples and simplifies all known near-optimal $(1+\varepsilon)$-approximate transshipment and shortest paths results: they all (implicitly) construct approximate dual solutions and boost them.

  \medskip

  We provide a very simple analysis based on the multiplicative weights framework. Furthermore, to keep the paper completely self-contained, we provide a new (and arguably much simpler) analysis of multiplicative weights that leverages well-known optimization tools to bypass the ad-hoc calculations used in the standard analyses.
\end{abstract}

\section{Introduction}

Suppose we are given a weighted graph $G = (V, E)$ and a \emph{demand vector} $d \in \R^V$ satisfying $\sum_{v \in V} d(v) = 0$, where $d(v)$ denotes the number of units of some (single) commodity that are either available (if $d(v) > 0$) or required (if $d(v) < 0$) at the node $v$. \emph{Transshipment} asks to distribute the available units of the commodity until it perfectly matches the requirement. The goal is to minimize the total cost of movement, where moving a single unit over an edge $e$ has a cost of $w(e)$ ($w(e) \ge 0$ is the weight of $e$).

Transshipment is a strong generalization of multiple problems, including the $s-t$ shortest path problem, optimal transport, or the assignment problem on metric spaces.
\begin{example}
  Given two nodes $s, t$ in a weighted graph $G$, the shortest path can be modeled as transshipment for the demand $d(v) := \1{v = s} - \1{v = t}$.  
\end{example}
\begin{example}
  Let $(V, d)$ be a metric space and let $A, B \subseteq V$ be two disjoint subsets. The minimum cost perfect matching between $A$ and $B$ (the so-called assignment problem~\cite{ramshaw2012minimum}) can be modeled as transshipment on the complete bipartite graph $(A \cup B, A \times B)$ with weights $w(\{a, b\}) = d(a, b)$ and the demand vector $d(v) := \1{v \in A} - \1{v \in B}$. This also models the Wasserstein distance (also known as the earth mover's distance or optimal transport~\cite{villani2009optimal}).
\end{example}

Perhaps surprisingly, transshipment has proven to be extremely useful for uncovering the distance structure (i.e., shortest paths) of a graph. Indeed, the problem has been the centerpiece of all near-optimal approaches to the single-source shortest path problem in the parallel and distributed settings~\cite{BKKL17, Li20, AndoniSZ20, goranci2022universally, rozhon2022undirected}.

The key property that differentiates transshipment from other similar problems like shortest path is the so-called \emph{boosting} property---one can boost a crude, say $\poly(\log n)$-approximate solution, to a near-optimal $(1 + \eps)$-approximate solution. This conceptually reduces $(1+\eps)$-transshipment (and shortest path computation) to approximate transshipment. However, not all approximators can be boosted and a more principled understanding of which approaches are susceptible to boosting is required. Moreover, the current boosting algorithms are coupled together with the specific approximators they use, making them non-modular, complicated, and hard to reuse.

The main takeaway of our paper is conceptual: \emph{any} black-box oracle that computes a $\alpha$-approximate \emph{dual} solution can be boosted to a $(1 + \eps)$-approximate dual solution. This significantly simplifies current transshipment results by decoupling them into two independent questions: (1) how to obtain an approximate dual solution (which is often model-specific), and (2) how to boost this approximate solution (which can be reused). The scope of this paper is to develop a simple framework for the latter question.

We provide a very simple algorithm and analysis based on the \emph{multiplicative weights framework}. Furthermore, to keep the paper completely self-contained, we provide a new (and arguably much simpler) analysis of multiplicative weights that leverages well-known optimization tools to bypass the ad-hoc calculations used in the standard analyses. (\Cref{sec:simple-mw-analysis}).

We now provide several examples that show how prior approaches all (implicitly) construct approximate dual and then boost them.

\begin{example}Sherman~\cite{She17b} gave the first sequential almost-linear\footnote{We refer to $m \poly(\log n)$ as near-linear and $m^{1+o(1)}$ as almost-linear runtimes.} $(1+\eps)$-transshipment algorithm. The main insight was the construction of a so-called \emph{linear cost approximator}, which is a linear operator $R$ (i.e., matrix) such that $\norm{R d}_1$ approximates the optimal transshipment cost in the sense that $\opt(d) \le \norm{R d}_1 \le n^{o(1)} \opt(d)$ for all demands $d \in \R^V$. Their paper uses linear cost approximators with subgradient descent to show one can obtain a $(1+\eps)$-approximate solution. We provide a conceptual decoupling and reinterpretation of their paper: one can use any linear cost approximator $R$ to directly obtain an approximate dual solution, which can, in turn, be boosted to an $(1+\eps)$-approximate solution via our framework.
\end{example}

\begin{example}Haeupler and Li~\cite{HL18} solve $n^{o(1)}$-approximate transshipment in the distributed setting and leave the possibility of boosting to an $(1+\eps)$-approximation as the main open problem, which would have yielded important consequences in the distributed setting. Our paper provides a partial explanation to why their approach was not susceptible to boosting: their approach, based on low-stretch spanning trees, only computes a \emph{primal} solution (i.e., an approximate flow), whereas a dual solution is required. A dual-based solution was later recently developed by Rozhon et al.~\cite{rozhon2022undirected}.
\end{example}

\begin{example}Other successful $(1+\eps)$-transshipment approaches either approximate the solution by solving the original problem on a spanner~\cite{BKKL17}, or by constructing a linear cost approximator on an emulator of the original graph~\cite{Li20, AndoniSZ20} (an emulator of $G$ is a graph $H$ whose distance structure multiplicatively approximates the one of $G$, see \Cref{sec:application-spanners-emulators}). We show all of these approaches can be reinterpreted as obtaining an approximate dual solution.
\end{example}

\textbf{Comparison with Becker et al.~\cite{BKKL17}.} The paper contributed the first polylog-competitive existentially-optimal shortest path algorithm in the distributed setting (up to $\tilde{O}(1)$-factors). Crucially, they develop a boosting framework for transshipment which, similar to this paper, uses an approximate dual solver to construct a near-optimal solution. The main drawbacks of their solver are that (1) the analysis of \cite{BKKL17} is quite involved, stemming from it being based on projected gradient descent, and (2) as written, the interface of the \cite{BKKL17} solver relies on solving a modified version of transshipment which is harder to interpret and work with than the original one. As stated in the journal version of \cite{BKKL17}, their interface can be significantly simplified (by working with projections), but this degrades the runtime to have an $\alpha^4$-dependency w.r.t. the approximation quality $\alpha$ of the approximator (we provide an $\alpha^2$-dependency) and requires non-explicit modifications to the solver that might be difficult for non-experts.\footnote{Specifically, they require the returned dual solution be orthogonal to the demand vector. However, as they note in the journal version of their paper, this issue can be mitigated by projecting a general solution to the space of vectors orthogonal to the demand: this can be shown to work with a loss in approximation factor and time complexity if one appropriately initializes the solution.} On the other hand, the approach presented in our paper has several drawbacks compared to \cite{BKKL17}, such as: (1) our solver requires a guess on the optimal solution, which is obtained using binary search, while their solver does not need adapting the internal parameters during the optimization process, and (2) our dual-only solver needs to perform extra steps to return a feasible primal solution.
However, independent of the drawbacks, we believe the user-friendly interface, better runtime, and a simpler analysis make our conceptual contribution worthwhile.

\medskip

\textbf{Potential impact.} Ultimately, we hope that this paper will encourage an ongoing effort to simplify deep algorithmic results that use continuous optimization tools. Such an effort would potentially yield a dual benefit: it would both lower the barrier to entry for newcomers by conceptually simplifying the current approaches, as well as help to transfer the modern theoretically-optimal algorithms into real-world state-of-the-art by allowing practitioners to independently combine the theoretical ideas with the many heuristics necessary for an algorithm to perform well in practice.

\medskip

\textbf{Organization of the paper.} We present a model-oblivious boosting framework for transshipment in \Cref{sec:boosting-framework} and apply it in \Cref{sec:applications} to simplify previous results. These applications are loosely grouped by the method of computing the approximate dual solution: \Cref{sec:application-spanners-emulators} presents results when the approximate solution is computed on a spanner or emulator (i.e., on graphs that approximate the original metric). \Cref{sec:linear-cost-approximators} presents results that compute the dual solution via (aforementioned) linear cost approximators. Finally, \Cref{sec:simple-mw-analysis} gives a simple and self-contained analysis of multiplicative weights.

\section{Preliminaries}\label{sec:prelims}

\textbf{Graph Notation.} Let $G = (V, E)$ be a undirected graph and let $n := |V|, m := |E|$.
It is often convenient to direct $E$ consistently. For simplicity and without loss of generality, we assume that $V = \{v_1, v_2, \ldots, v_{n}\}$ and define $\vec{E} = \{ (v_i,v_j) \mid (v_i, v_j) \in E, i < j\}$. We identify $E$ and $\vec{E}$ by the obvious bijection. We chose this orientation for simplicity and concreteness: arbitrarily changing the orientations does not influence the results (if done consistently).
We denote with $B \in \{-1,0,1\}^{V \times \vec{E}}$ the node-edge incidence matrix of $G$, which for any $v \in V$ and $e = (s,t) \in \vec{E}$ assigns $B_{s,e} = 1$, $B_{t,e} = -1$, and $B_{u,e} = 0$ when $u \not \in \{s, t\}$.
A weight or length function $w$ assigns each edge $e \in \vec{E}$ a weight $w(e) > 0$. The weight function can also be interpreted as a diagonal \emph{weight matrix} $W \in \R_{\ge 0}^{\vec{E} \times \vec{E}}$ which assigns $W_{e,e} = w(e) \ge 1$ for any $e \in \vec{E}$ (and $0$ on all off-diagonal entries).
In this paper, it is often more convenient to specify weighted graphs via $G \cong (B, W)$, i.e., by specifying its matrices $B$ and $W$ as defined above.

\begin{figure}[h]
   \centering
   \includegraphics[width=0.5\textwidth,trim=0 0 0 0]{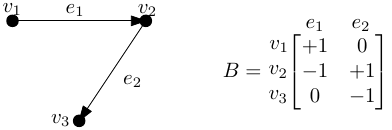}
   \caption{A simple graph $G$ and its corresponding node-edge incidence matrix $B$.}
 \label{fig:example1}
\end{figure}

\textbf{Flows and Transshipment (TS).}\label{sec:prelim-transshipment}
%
A \emph{demand} is a $d \in \R^{V}$. We say a demand is \emph{proper} if $\sum_{v \in V} d_v = 0$. A \emph{flow} is a vector $f \in \R^{\vec{E}}$, where $f({\vec{e}}) > 0$ if flowing in the direction of the arc $\vec{e}$ and negative otherwise. A flow $f$ \emph{routes} a demand $d$ if $Bf = d$. It is easy to see only proper demands are routed by flows. The cost of a flow $f$ is $\norm{Wf}_1$.
For a weighted graph $G$ and a given proper demand $d$ the \emph{transshipment problem} (or TS, for short) asks to find a flow $f_d^*$ of minimum-cost among flows that route $d$. In other words, the tuple $(B, W, d)$ specifies a transshipment instance. When the underlying graph $G \cong (B, W)$ is clear from the context, we define $\norm{d}_\opt := \norm{ W f_d^*}_1$ to denote the cost of the optimal flow for routing demand $d$.
%
%
The transshipment problem naturally admits the following LP formulation and its dual. The primal asks us to optimize over all \emph{flows} $f \in \R^{\vec{E}}$, while the dual asks us to optimize over all vectors $\phi : \R^V$, which we refer to as \emph{potentials}.
\begin{align}
  \textbf{Primal:}~~ \min~ \norm{Wf}_1 : Bf = d, && \textbf{Dual:}~~ \max~ \inner{d, \phi} : \norm{W^{-1} B^\top \phi}_{\infty} \leq 1. \label{eq:TS-primal-dual}
\end{align}
Scalar products are denoted as $\inner{x, y} = x^T \cdot y$. Finally, we assume the weights and demands are polynomially-bounded, hence $\norm{d}_\opt \le n^{O(1)}$. Any feasible primal and dual values provide an upper and lower bound on $\norm{d}_\opt$, formally stated in the following well-known result.
\begin{fact}\label{fact:weak-duality}
  Let $f \in \R^{\vec{E}}$ and $\phi \in \R^V$ be any feasible primal and dual solution: i.e., $Bf = d$ and $\norm{W^{-1} B^T \phi}_\infty \le 1$. Then $\inner{d, \phi} \le \norm{d}_\opt \le \norm{W f}_1$.
\end{fact}

For example, consider the $s$-$t$ shortest path subproblem where $d(v) = \1{v = s} - \1{v = t}$ specifies the demand. One optimal solution to the primal/dual pair is to set $f(\vec{e})$ to $1$ iff $\vec{e}$ is on some fixed shortest path from $s$ to $t$; $\phi(v)$ is set to the distance in $G$ from $t \in V$. Note that in this case the primal and dual objectives are equal, and correspond to the weight of the shortest path from $s$ to $t$.

\begin{figure}[h]
   \centering
   \includegraphics[width=\textwidth,trim=0 0 0 0]{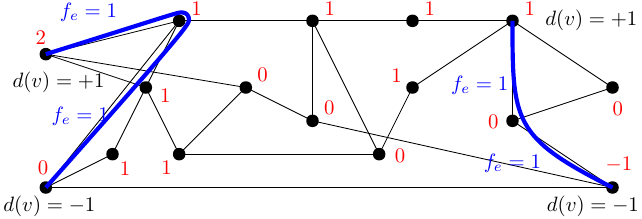}
   \caption{A example transshipment graph with its exact solution. The original graph is unit weight $w_e = 1$ and undirected. The demand $d$ is non-zero at four nodes. The optimal primal flow $f$ is depicted in \textcolor{blue}{blue} and is non-zero for four edges. One of many optimal vectors $\phi$ is depicted in \textcolor{red}{red}. The optimal value of the solution is $\opt = 4$.}
 \label{fig:example2}
\end{figure}

\textbf{Asymptotic Notation.}
We use $\tilde{O}$ to hide polylogarithmic factors in $n$, i.e., $\tilde{O}(1) = \polylog n$.

\textbf{Algorithmic model and basic vector operations.} To facilitate both simplicity and generality, we specify our algorithms using high-level operations. Specifically, in a unit operation, we can perform the following so-called \textbf{basic vector operations}: (1) assign vectors in $\R^n$ or $\R^m$ to variables, (2) add two (vector) variables together, (3) apply any scalar function $\lambda : \R \to \R$ to each component of a vector separately, and (4) compute matrix-vector products with matrices $B$, $B^T$, $W$, and $W^{-1}$. Note that each basic vector operation can be near-optimally compiled into standard parallel/distributed models. In PRAM: each operation can be performed in $\tilde{O}(1)$ depth and near-linear work. In the standard distributed model of computation CONGEST~\cite{peleg2000distributed} basic vector operations can be computed in a single round of distributed computation (where the variables are stored in the obvious distributed fashion).

\textbf{Multiplicative weights (MW) framework} is a powerful meta-algorithm that allows for (among other things) solving various optimization tasks by reducing them to simpler (so-called ``linearized'') versions of the original problem~\cite{AroraHK12}. For the purposes of this paper, we define the following pair of tasks.
\begin{definition}\label{table:feasbility-task-def}
  Let $\gamma \in \mathbb{R}$ be a scalar, $A \in \R^{m \times n}$ be a matrix, and $b \in \R^n$ be a vector. We define the following \textbf{Feasibility task}:
  \begin{align}
    \exists? x\in\R^n \mid\quad \norm{Ax}_\infty + \inner{b, x} \le \gamma . \label{eq:feas-task}
  \end{align}

  Given additionally a vector $p \in \R^m$ satisfying $\norm{p}_1 \le 1$ and an accuracy parameter $\eps > 0$, we also define the following \textbf{Linearized task}:
  \begin{align}
    \exists? x\in\R^n \mid\quad \inner{p, Ax} + \inner{b, x} \le \gamma - \eps . \label{eq:linear-task}
  \end{align}
\end{definition}
Note that if the Feasibility task (\Cref{eq:feas-task}) is feasible, then the Linearized task (\Cref{eq:linear-task}) is also feasible for every $p$ (satisfying $\norm{p}_1 \le 1$) and every $\eps > 0$.



Suppose we want to solve some fixed Feasibility task (\Cref{eq:feas-task}) and assume we know how to solve the accompanying (typically much easier!) Linearized task (\Cref{eq:linear-task}, for any $p$ and $\eps$) via some \emph{black-box Oracle}. Then, there exists a simple algorithm that computes a solution to the Feasibility task by repeatedly querying the Oracle with different values of $p$ that satisfy $\norm{p}_1 \le 1$ (the accuracy $\eps$ stays fixed); the oracle is assumed to return a feasible solution $x$ for each queried Linearized task. 



We define the \textbf{width of the Oracle} $\rho > 0$ to be (any upper bound on) the largest width of a solution $\norm{Ax}_{\infty}$ that can be returned by the Oracle, i.e., $\rho \ge \norm{A x}_{\infty}$ during the course of the algorithm. Oracles with larger widths need to be queried more times, hence we aim to construct Oracles with their width being as small as possible. The following \Cref{theorem:MW-feasibility} and \Cref{algo:MW-strategy-feasibility} give a solver for the Feasibility task (\Cref{eq:feas-task}) assuming the Oracle. We defer the proof to \Cref{sec:other-MW-forms}.

\begin{theorem}\label{theorem:MW-feasibility}
  Let $(A, b, \gamma)$ be a feasible Feasibility task (\Cref{eq:feas-task}) and fix $\eps > 0$. Suppose we have access to an Oracle that will solve the accompanying Linearized task \Cref{eq:linear-task} specified by $(A, b, \gamma, p, \eps)$ for any $\norm{p}_1 \le 1$. Then, \Cref{algo:MW-strategy-feasibility} constructs a feasible solution for \Cref{eq:feas-task} and queries the Oracle at most $4 \eps^{-2} \rho^2 \ln (2m)$ times. Here, $\rho > 0$ is the width of the Oracle.
\end{theorem}

\begin{algorithm}[h]
  \caption{Solver for the Feasibility task using an oracle for the Linearized task.}
  \label{algo:MW-strategy-feasibility}
  \begin{enumerate}
  \item \textbf{Input:} Feasbility task ($A \in \R^{m \times n}, b \in \R^n, \gamma \in R$) and $\eps > 0$.
  \item Initialize $x_* \gets \vec{0} \in \R^n$ and $\beta := \eps / (2 \rho^2)$.
  \item For $t = 1, \ldots, T$ rounds, where $T := 4 \eps^{-2} \rho^2 \ln (2m)$:
    \begin{enumerate}
    \item Let $q \gets \begin{bmatrix}A \\ -A\end{bmatrix} x_* \in \R^{2m}$.
    \item Let $q'_i \gets \exp(\beta q_i)_i$ for $i \in [2m]$.
    \item Let $p_t \gets (1 / \sum_{i=1}^{2m} q'_i)(q'_i - q'_{i+m})$. (Normalization and flattening.)
    \item MW outputs $p_t \in \R^m$ to Oracle. (Note that $\norm{p_t}_1 \le 1$.)
    \item Oracle returns a solution $x_t \in \R^n$ to the Linearized task w.r.t. $p_t$. ($\rho$ must be set large enough such that $\norm{A x}_\infty \le \rho$.)
    \item We update $x_* \gets x_* + x_t$.
    \end{enumerate}
  \item MW outputs $(1 / T) \cdot x_* \in \R^n$.
  \end{enumerate}
\end{algorithm}

\section{A Boosting Framework for Transshipment}\label{sec:boosting-framework}

We describe how to compute an $(1 + \eps)$-approximate solution for transshipment given only a black-box oracle which computes an $\alpha$-approximate dual solution. This oracle is called the dual-only $\alpha$-approximator (where the dual is defined as in \Cref{eq:TS-primal-dual}). 

\begin{definition}[Dual-Only Approximator]\label{def:preconditioner}
  Let $G \cong (B, W)$ be a weighted graph. A dual-only $\alpha$-approximator for transshipment over $G$ is an oracle which, given any proper demand $d \in \R^V$, outputs a dual solution $\phi \in \R^V$ satisfying the following properties:
  \begin{itemize}\setlength\itemsep{0em}
  \item (Dual feasibility) $\norm{ W^{-1} B^T \phi }_\infty \le 1$.
  \item (Approximation guarantee): $\inner{d, \phi} \ge \frac{1}{\alpha} \norm{d}_\opt$ .
  \end{itemize}
\end{definition}
Note that, directly from its definition, a dual-only $\alpha$-approximator can be used to obtain an $\alpha$-approximate value of the solution.

We typically want $\poly(\log n)$-approximators or $n^{o(1)}$-approximators that can be constructed and queried efficiently. However, for pedagogical purposes, we first show that the minimum spanning tree (MST) is a non-trivial $n$-approximator.
\begin{example}[MST]Let $d \in \R^V$ be an arbitrary proper demand. The MST $T$ of $G$ can be used as a simple $n$-approximator for transshipment. First, root the MST $T$ in an arbitrary $r \in V$ and assume without loss of generality (up to re-orientation of edges in $E$) that all edges point from parent to child nodes in this rooted tree. Next, let $f_T$ be the unique flow supported on $T$ that perfectly routes $d$. We now define $\phi \in \mathbb{R}^V$ by saying $\phi(r) := 0$ and proceeding in a top-to-bottom order. For each parent-child tree-edge $e = (p, c)$, set $\phi(c) := \phi(p) - \mathrm{sign}(f_T(e)) w(e)$. Now, by construction, we have $\norm{W f_T}_1 = \inner{\phi, d}$ (decompose the flow $f_T$ into a positive combination of oriented paths such that all of them cross each tree edge with the same orientation; the claim is true for each one of them and, therefore, for their sum). Furthermore, $\norm{W^{-1} B^T \phi}_\infty \le n$ by the following argument: consider each edge $e = \{u, v\} \in E_G$ and consider the unique $u-v$ tree-path. This path is composed of at most $n$ edges, and each one of them have weight at most $w(e)$ (since $T$ is the MST). Hence, $|\phi(u) - \phi(v)| \le w(e) \cdot n$, which is equivalent to the claim above. Finally, defining $\phi_* := \frac{1}{n} \phi$ we get a feasible dual solution that is $n$-approximate: Using \Cref{fact:weak-duality} we have $||d||_\opt \le \norm{W f_T}_1 = \inner{\phi, d} = n \cdot \inner{\phi_*, d}$, as required.
\end{example}



We now show the central claim of our framework: given a dual-only $\alpha$-approximator we can leverage the multiplicative weights framework (\Cref{table:feasbility-task-def}) to provide feasible potentials (i.e., a dual solution) $\phi \in \R^V$ that are $(1 + \eps)$-approximate, i.e., $\inner{d, \phi} \ge \frac{1}{1 + \eps} \norm{d}_\opt$. The existence of the boosting procedure is formalized in \Cref{lemma:primal-dual-boosting-to-eps}, while the explicit algorithm is deferred to \Cref{algo:boosting-strategy} in \Cref{sec:other-MW-forms}.

As an assumption to simplify the exposition, we can safely assume that we know the value of $g := \norm{d}_\opt$ (up to a multiplicative $1 + \eps$) as this value can be ``guessed'' via a standard binary search since our method will either certify that $\norm{d}_\opt \le (1 + \eps)g$ (telling us our guess $g$ is too low), or will otherwise construct a feasible solution $\phi$ with $\inner{d, \phi} \ge g$ (telling us we can increase our guess). 

\begin{lemma}\label{lemma:primal-dual-boosting-to-eps}
  Let $(B, W, d)$ be a transshipment instance and let $\eps > 0$. Given any $g \ge 0$, and any dual-only $\alpha$-approximator, there is a $4 \eps^{-2} \alpha^2 \ln (2m)$-round algorithm that, in each round, queries the approximator once and performs $O(1)$ basic vector operations. At termination, the algorithm either:
  \begin{enumerate}
  \item outputs (feasible) potentials $\phi_* \in \R^V$ satisfying $\norm{W^{-1} B^T \phi_*}_\infty \le 1$ and $\inner{d, \phi_*} \ge g$, or,
  \item detects that $\norm{d}_\opt \le (1 + \eps)g$; indeed, it outputs an (infeasible) flow $f_* \in \R^E$ satisfying $\norm{W f_*}_1 \le g$ and $\norm{d - Bf_*}_\opt \le \eps g$.
  \end{enumerate}
\end{lemma}
\textbf{Remark.} If one is only concerned about finding an $(1 + \eps)$-approximate dual solution, one can completely ignore the infeasible flow that is being outputted and simply use the fact that the second condition guarantees $\norm{d}_\opt \le (1 + \eps)g$, which is sufficient for binary search. Furthermore, we note that any such (infeasible) flow $f_*$ satisfying the above properties guarantees $\norm{d}_\opt \le (1 + \eps)g$ by the following argument. First, by definition of $\norm{d - B f_*}_\opt \le \eps g$, there exists a flow $f\residual$ that routes $d - B f_*$ and has cost $\norm{W f\residual}_1 \le \eps g$. Then, the flow $f_* + f\residual$ routes demand $d$ (since $B f_* + B f\residual = B f_* + d - B f_* = d$) and has cost at most $\norm{W f_*}_1 + \norm{W f\residual}_1 \le g + \eps g = (1 + \eps) g$, implying that $\norm{d}_\opt \le (1 + \eps)g$.

\begin{proof}
  First, finding potentials $\phi \in \R^V$ satisfying $\norm{W^{-1} B^T \phi}_\infty \le 1$ and $\inner{d, \phi} \ge g$ is equivalent to finding potentials $\exists? \phi\in\R^V \mid\ \norm{W^{-1} B^T \phi}_\infty - \inner{\frac{1}{g} d, \phi} \le 0$ (one direction is immediate, the other direction follows by the fact that we can scale $\phi$ such that $\inner{d, \phi} = g$). Therefore, it is sufficient to solve the following so-called TS Feasibility task (see \Cref{table:ts-feasibility-task}).

  \begin{figure}[h]
  \begin{mdframed}
    \begin{tabular}{l r l}
      \textbf{TS feasbility task}: & $\exists? \phi\in\R^V \mid$ & $\norm{W^{-1} B^T \phi}_\infty \le 1$ and $\inner{d, \phi} \ge g$. \\
      \textbf{TS feasbility task} (equivalent): & $\exists? \phi\in\R^V \mid$ & $\norm{W^{-1} B^T \phi}_\infty - \inner{\frac{1}{g} d, \phi} \le 0$. \\
      \textbf{Linearized TS task} (given $\norm{p}_1 \le 1$): & $\exists? \phi\in\R^V \mid$ & $\inner{p, W^{-1} B^T \phi} - \inner{\frac{1}{g} d, \phi} \le - \eps$. \\
      \textbf{Linearized TS task} (equivalent): & $\exists? \phi\in\R^V \mid$ & $\inner{d\residual, \phi} \ge \eps \cdot g$, \\
    & & where $d\residual := d - B (g \cdot W^{-1} p)$.
  \end{tabular}
  \caption{The (second, equivalent form of the) TS Feasibility task is a subcase of the Feasibility task defined in \Cref{table:feasbility-task-def} with $A := W^{-1} B^T, b := (1/g)d, \gamma := 0$, and renaming $x \to \phi$. The equivalent forms of the tasks follow by straightforward algebraic manipulation.%
  }
  \label{table:ts-feasibility-task}
  \end{mdframed}
  \end{figure}

  We apply the MW framework by interpreting the TS Feasbility task as a Feasibility Task in the sense of \Cref{table:feasbility-task-def}, solving it via \Cref{algo:MW-strategy-feasibility} where we have to implement the Oracle.

  First, we note that the TS Feasibility task directly corresponds to a $(A := W^{-1} B^T, b:= - \frac{1}{g} d, \gamma := 0)$-Feasibility task. We aim to implement the Oracle for the corresponding Linearized TS task with a small width $\rho$. To recap, $\rho$ is the maximum value of $\norm{W^{-1} B^T \phi}_\infty$ ever returned by the Oracle---we later determine that setting $\rho := \alpha$ suffices.

  The Oracle, upon receiving $p$ by \Cref{algo:MW-strategy-feasibility}, queries the dual-only $\alpha$-approximator with the (so-called) residual demand $d\residual := d - B (g \cdot W^{-1} p)$. Intuitively, we can interpret $p$, or more specifically $g \cdot W^{-1} p$, as the ``current'' iterate of the final flow solution. Specifically, $\norm{g \cdot W^{-1} p}_1 \le g$, i.e., it has a small cost since $g$ is a guess for $\opt$. If the residual demand can be routed with a small cost of at most $\eps g$ (which can be estimated via the approximator), it means that $\norm{d\residual} \le \eps g$, hence $f_* := g \cdot W^{-1} p$ satisfies the second output condition.

  The appoximator, being asked to route $d\residual$, returns the $\alpha$-approximate feasible dual, i.e., a vector $\phi\residual \in \R^V$ satisfying $\inner{d\residual, \phi\residual} \ge \frac{1}{\alpha} \norm{d\residual}_\opt$ and $\norm{W^{-1} B^T \phi\residual}_\infty \le 1$. The Oracle outputs $\phi' := \alpha \cdot \phi\residual$. Note that the width of the oracle is exactly $\norm{W^{-1} B \phi'}_\infty = |\alpha| \norm{W^{-1} B \phi\residual}_\infty \le \alpha \cdot 1 = \alpha$.

  Either $\inner{d\residual, \phi'} \ge \eps \cdot g$, and the Oracle successfully solves the Linearized TS task by returning $\phi\residual$, in which case the MW loop continues. If this is always the case, \Cref{algo:MW-strategy-feasibility} outputs $\phi_*$ satisfying $\norm{W^{-1} B^T \phi_*}_\infty \le 1$ and $\inner{d, \phi_*} \ge g$, as required. Regarding the width of the solution, we have that $\norm{W^{-1} B^{T} \phi'}_\infty \le \alpha$, hence setting $\rho := \alpha$ suffices, leading to at most $4\eps^{-2}\alpha^2 \ln 2m$ rounds of the algorithm.

  On the other hand, if this is (ever) not the case, we say the Oracle fails. In this case, at the moment of failure, we define $f_* := g \cdot W^{-1} p$ and have that $\inner{d\residual, \phi'} \le \eps \cdot g$. Since $\phi' = \alpha \phi\residual$, we have $\inner{d\residual, \phi\residual} \le \frac{\eps}{\alpha} \cdot g$. Since $\phi\residual$ is an $\alpha$-approximate dual w.r.t. $d\residual$, we have that $\norm{d - B(g \cdot W^{-1} p)}_\opt = \norm{d\residual}_\opt \le \alpha \cdot \frac{\eps}{\alpha} g = \eps \cdot g$. Therefore, $f_*$ satisfies the second condition and we are done.
\end{proof}
The full algorithm is given as \Cref{algo:boosting-strategy} in the Appendix.

\textbf{Reducing the residual error of the primal.} While the booster of \Cref{lemma:primal-dual-boosting-to-eps} returns a feasible $(1+\eps)$-approximate dual solution, it does not return a feasible primal solution (i.e., flow). However, this issue can be resolved by repeatedly routing the residual demand $d - B f_*$ until the cost of routing the residual demand drops to an insignificant $1/\poly(n)$-fraction of the original cost, at which point any trivial reparation scheme suffices (like routing along the MST). See \Cref{sec:reducing-residual-error} for more details. Combining the above result with binary searching the guess $g$ and with the residual error reduction (but without the model-specific trivial routing), we immediately yield the following result (proof deferred to \Cref{sec:reducing-residual-error}).
\begin{restatable}{corollary}{corSmallResidualRrror}\label{corSmallResidualRrror}
  Let $(B, W, d)$ be a transshipment instance. Given any $1/2 \ge \eps > 0$, $C > 0$ and dual-only $\alpha$-approximator, there is an $\tilde{O}(C \cdot \eps^{-2} \alpha^2)$-round algorithm computing (both):
  \begin{itemize}
  \item a feasible dual $\phi_*$ satisfying $(1 + \eps)^{-1} \norm{d}_\opt \le \inner{d, \phi} \le \norm{d}_\opt$, and,
  \item an (infeasible) primal $f_*$ satisfying $\norm{W f}_1 \le (1 + \eps) \norm{d}_\opt$ and $\norm{d - B f_*}_\opt \le n^{-C} \norm{d}_\opt$.
  \end{itemize}
  In each round, the algorithm performs $O(1)$ basic vector operations and queries to the approximator.
\end{restatable}

\section{Applications}\label{sec:applications}

In this section, we show how to apply the boosting framework of \Cref{sec:boosting-framework} to simplify and decouple several landmark results in the parallel and distributed settings. First, we describe results that approximate transshipment by solving it on a compact graph representation called a spanner or emulator (\Cref{sec:application-spanners-emulators}). Then, we describe results that use linear cost approximators (\Cref{sec:linear-cost-approximators}).

\subsection{Approximating via spanners and emulators}\label{sec:application-spanners-emulators}

A $\beta$-approximate \textbf{emulator} of a graph $G = (V, E_G)$ is a weighted graph $H = (V, E_H)$ on the same vertex set where the distances are approximated with a distortion of $\beta$; i.e., $\dist_G(u, v) \le \dist_H(u, v) \le \beta \cdot \dist_G(u, v)$ for all $u, v \in V$. A spanner is simply an emulator that is a subgraph of $G$, i.e., where $E_H \subseteq E_G$, making it particularly well-studied in some settings.

Approximating with emulators is conceptually straightforward: faced with a transshipment instance on $G$, we (approximately) solve the instance on $H$, which yields an approximate solution on $G$. This is captured by the following result.
\begin{theorem}\label{theorem:preconditioning-by-spanners}
  Let $H$ be a $\beta$-approximate emulator of $G$. Any dual-only $\alpha$-approximator on $H$ is a dual-only $(\alpha \cdot \beta)$-approximator on $H$.
\end{theorem}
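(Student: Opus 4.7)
The plan is to verify each of the three defining properties of an $(\alpha\beta)$-approximate dual-only preconditioner on $G$ directly. Given a proper demand $d \in \R^V$, we simply feed $d$ into the given $\alpha$-approximate preconditioner on $H$, which returns potentials $\phi \in \R^V$ witnessed by some (possibly non-returned) flow $f_H$ on $H$ satisfying $B_H f_H = d$, $\norm{W_H f_H}_1 \le \inner{d, \phi}$, and $\norm{W_H^{-1} B_H^T \phi}_\infty \le \alpha$. We output the very same $\phi$ as the preconditioner's answer on $G$.

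First, I would verify approximate dual feasibility on $G$. The bound $\norm{W_H^{-1} B_H^T \phi}_\infty \le \alpha$ says exactly that $|\phi(u) - \phi(v)| \le \alpha \cdot w_H(u,v)$ for every edge $(u,v) \in E_H$, i.e., $\phi$ is $\alpha$-Lipschitz with respect to the edge weights of $H$. Telescoping along a shortest $H$-path between arbitrary $u,v \in V$ extends this to $|\phi(u) - \phi(v)| \le \alpha \cdot \dist_H(u,v)$. For any edge $(u,v) \in E_G$ with weight $w_G(u,v) = \dist_G(u,v)$, the emulator condition $\dist_H(u,v) \le \beta \cdot \dist_G(u,v)$ then yields $|\phi(u) - \phi(v)| \le \alpha\beta \cdot w_G(u,v)$, i.e., $\norm{W_G^{-1} B_G^T \phi}_\infty \le \alpha\beta$, as required.

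Next, I would establish existence of the implicit primal flow on $G$ certifying strong duality. Convert $f_H$ into a flow $f_G$ on $G$ by routing, for each $(u,v) \in E_H$, exactly $f_H(u,v)$ units along a fixed shortest $u$-$v$ path in $G$. Linearity of the incidence operation gives $B_G f_G = B_H f_H = d$, and the cost is bounded via
\[
\norm{W_G f_G}_1 \le \sum_{(u,v) \in E_H} |f_H(u,v)| \cdot \dist_G(u,v) \le \sum_{(u,v) \in E_H} |f_H(u,v)| \cdot w_H(u,v) = \norm{W_H f_H}_1 \le \inner{d, \phi},
\]
where the middle inequality combines the emulator bound $\dist_G(u,v) \le \dist_H(u,v)$ with the trivial estimate $\dist_H(u,v) \le w_H(u,v)$ for an edge of $H$. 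This supplies the (non-returned) $G$-flow witnessing strong duality for $\phi$.

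The proof is essentially a direct unpacking of definitions, so there is no serious obstacle. The only subtleties to be careful about are: (i) extending the edge-wise Lipschitz condition on $\phi$ to a pairwise condition via a telescoping argument along $H$-shortest paths, and (ii) remembering that the dual-only preconditioner on $H$ only promises the \emph{existence} of $f_H$, so in Step~2 we are constructing $f_G$ purely as a certificate of the strong duality inequality and need not actually compute it.
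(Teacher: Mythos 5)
Your proof is correct and matches the paper's argument essentially step for step: you reuse the returned potentials $\phi$ unchanged, verify dual feasibility on $G$ by telescoping the edgewise $\alpha$-Lipschitz bound along an $H$-path of length at most $\beta\cdot w_G(e)$, and certify strong duality by mapping the witness flow $f_H$ to a $G$-flow along shortest $G$-paths using $\dist_G(u,v)\le\dist_H(u,v)\le w_H(u,v)$. One tiny imprecision: you assert $w_G(u,v)=\dist_G(u,v)$ for edges of $G$, which need not hold (only $\dist_G(u,v)\le w_G(u,v)$), but since the slack goes the favorable direction the chain $|\phi(u)-\phi(v)|\le\alpha\,\dist_H(u,v)\le\alpha\beta\,\dist_G(u,v)\le\alpha\beta\,w_G(u,v)$ is unaffected.
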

\begin{proof}
  Fix a demand $d$ on $G$. Querying the $H$-approximator, we obtain a dual solution $\phi_H$ satisfying $\norm{W^{-1}_H B_H^T \phi_H}_\infty \le 1$; we also know an accompanying primal solution $f_H$ exists with $W_H(f_H) \le \alpha \inner{d, \phi_H}$.

  \textit{Primal solution.} We construct a flow $f_G$ in $G$ as follows. For each edge $e \in E_H$ we know, due to $\dist_G(u, v) \le \dist_H(u, v)$, that there exists a path in $G$ of length at most $w_H(e)$; we add $f_H(e)$ amount of flow along this path. It is easy to check that, $f_G$ routes $d$ (i.e., $B_G f_G = d$, hence it is feasible) and that $W_G(f_G) \le W_H(f_H)$, hence $\norm{d}_{\opt(G)} \le \alpha \inner{d, \phi_H}$.

  \textit{Dual solution.} Let $\phi_G := \frac{1}{\beta}\phi_H$. Note that $\norm{d}_{\opt(G)} \le (\alpha \beta) \cdot \inner{d, \phi_G}$, hence it is sufficient to deduce $\norm{W^{-1}_G B_G^T \phi_G}_\infty \le 1$. Since $\phi_H$ is feasible in $H$, we have for each $e' = \{u', v'\}\in E_H$ that $(B_H^T \phi_G)_{e'} = |\phi_G(u') - \phi_G(v')| \le \frac{|\phi_H(u') - \phi_H(v')|}{\beta} = \frac{w_H(u', v')}{\beta}$. Fix an edge $e = \{u, v\} \in E_G$; since $\dist_H(u, v) \le \beta \cdot \dist_G(u, v)$ there exists a path $(u = p'_0, p'_1, p'_2, \ldots, p'_k = v)$ in $H$ of length at most $\beta \cdot w_G(e)$. Therefore, we can deduce that $\norm{W^{-1}_G B_G^T \phi_G}_\infty \le 1$ in the following way: $|(W^{-1}_G B_G^T \phi_G)_e| = \frac{|\phi_G(u) - \phi_G(v)|}{w_G(e)} \le \frac{\sum_{i=1}^T |\phi_G(p'_{i-1}) - \phi_G(p'_i)| }{w_G(e)} \le \frac{\sum_{i=1}^T w_H(p'_{i-1}, p'_i) }{\beta w_G(e)} \le \frac{\beta w_G(e)}{\beta w_G(e)} = 1$.
\end{proof}
\textbf{Remark.} There are a few immediate extensions to the above proof. Given a primal-dual approximator (one that returns both a primal and a dual) on a \emph{spanner}, we can immediately obtain a primal-dual approximator on $G$ since the returned primal $f_H$ is also a feasible primal in $G$. A similar property holds for emulators, but one would need to provide a mapping that embeds each edge $e \in E_G$ into (paths of) $H$ that are of length at most $\beta \cdot w(e)$ in order to construct the flow $f_G$ on $G$.

\textbf{Application: TS in Broadcast congested clique~\cite{BKKL17}.} Using algorithms from prior work, a Broadcast congested clique can compute an $\tilde{O}(1)$-approximate Baswana-Sen~\cite{baswana2007simple} spanner $H$ in $\tilde{O}(1)$ rounds. The edges of such a spanner are naturally partitioned into $n$ parts of size $\tilde{O}(1)$, where each part is associated with a unique node, and that node knows the edges in its part. Therefore, the spanner can be made global knowledge in $\tilde{O}(1)$ rounds using broadcasts. Therefore, each node can solve a transshipment instance on $H$, providing an $\tilde{O}(1)$-approximator for the original graph via \Cref{theorem:preconditioning-by-spanners}, culminating in an $\tilde{O}(\eps^{-2})$-round solution for $(1+\eps)$-transshipment.

\textbf{Application: existentially-optimal SSSP in Broadcast CONGEST~\cite{BKKL17}.} Consider the single-source shortest path (SSSP) problem where each node wants to compute $(1+\eps)$-approximate from some source $s \in V$. From prior work, we can compute an \emph{overlay graph} $G' = (V', E')$ where $V' \subseteq V$ and $|V'| = \tilde{O}(\eps^{-1} \sqrt{n})$ such that the SSSP task on $G$ reduces to SSSP on $G'$, and $G'$ can be computed in $\tilde{O}(D + \eps^{-1} \sqrt{n})$ rounds. As was shown in \cite{BKKL17}, an SSSP instance can be solved by solving $\tilde{O}(1)$ transshipment instances (the details are non-trivial and out of scope of this paper), hence the problem reduces to solving TS on $G'$. However, any $T$-round Broadcast congested clique algorithm can be simulated on $G'$ in $T \cdot O(D + |V'|) = T\cdot \tilde{O}(D + \eps^{-1} \sqrt{n})$ rounds of Broadcast CONGEST: we simulate a single round by constructing a BFS tree on $G$ (of depth $O(D)$ and in $O(D)$ rounds), and then pipelining all $|V'|$ messages (that are to be broadcasted in the current round) to the root and them down to all other nodes, taking $O(D + |V'|)$ rounds in Broadcast CONGEST per round of Broadcast congested clique. Combining with the Broadcast congested clique result, we obtain an $\tilde{O}(\eps^{3})(D + \sqrt{n})$-round algorithm.

\textbf{Application: near-optimal TS in PRAM~\cite{AndoniSZ20}.} The paper introduces a concept called \emph{low-hop emulator} $H = (V, E_H)$ of $G = (V, E)$ satisfying (i) $H$ is an $\tilde{O}(1)$-approximate emulator of $G$, (ii) $|E_H| = \tilde{O}(n)$, and (iii) $\dist_H^{O(\log \log n)}(u, v) = \dist_H(u, v)$, i.e., every (exact) shortest path in $H$ has at most $O(\log \log n)$ hops (edges). Moreover, low-hop emulators can be computed in PRAM in $\tilde{O}(1)$ depth and $\tilde{O}(m)$ work. Low hop emulators are particularly useful since Property (iii) implies that one can compute (exact) SSSP on them in $\tilde{O}(1)$ depth and $\tilde{O}(n)$ work (e.g., using $O(\log \log n)$ rounds of Bellman-Ford). The ability to compute exact SSSP enables each node of $H$ to be embedded into $\ell_1$ space of dimension $\tilde{O}(1)$ with (worst-case) distortion $\tilde{O}(1)$ (via so-called Bourgain's embedding~\cite{Bou85} via $\tilde{O}(1)$ SSSP oracle calls). Since $H$ is an emulator of $G$, the same embedding is an $\tilde{O}(1)$-distortion embedding of $G$. Using \Cref{theorem:preconditioning-by-spanners}, this reduces $(1+\eps)$-TS to finding an $\tilde{O}(1)$-approximator in $\ell_1$ space. This can be done in $\tilde{O}(1)$ depth and $\tilde{O}(n)$ work using linear cost approximators (explained in \Cref{sec:linear-cost-approximators}) by utilizing the so-called \emph{randomly shifted grids} method~\cite{indyk2003fast}. This culminates in an $\tilde{O}(\eps^{-2})$ depth and $\tilde{O}(\eps^{-2} m)$ work $(1+\eps)$-transshipment algorithm.


\subsection{Approximating by linear cost approximators}\label{sec:linear-cost-approximators}

A particularly successful type of approximator for transshipment has been the linear cost approximator. The successes of such an approximator include the first $m^{1 + o(1)}$ algorithm for transshipment in the centralized model~\cite{She17b} and the first $\tilde{O}(m)$-work and $\tilde{O}(1)$-depth parallel shortest path algorithm~\cite{AndoniSZ20, Li20}.

\begin{definition}\label{def:linear-cost-approximator}
  An $\alpha$-approximate linear cost approximator for a weighted graph $G$ is a $k \times n$ matrix $P$, such that, for any proper demand $d$ it holds that $\norm{d}_\opt \le \norm{P d}_1 \le \alpha \norm{d}_\opt .$
\end{definition}


Our insight is that one can immediately convert a linear cost approximator $P$ to a dual-only approximator. Note that the $\mathrm{sign}$ function is applied entry-wise to a vector.
\begin{theorem}
  Let $P$ be an $\alpha$-approximate linear cost approximator. Consider the function $\phi(d)$ that maps a demand $d$ to $\phi(d) := \frac{1}{\alpha} P^T \mathrm{sign}(P d)$. Then, $\phi$ is a dual-only $\alpha$-approximator.
\end{theorem}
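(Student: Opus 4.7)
The plan is to directly verify the two properties of a dual-only preconditioner listed in \Cref{def:preconditioner} by a short calculation, exploiting the two-sided bound $\norm{d}_\opt \le \norm{Pd}_1 \le \alpha\norm{d}_\opt$ of \Cref{def:linear-cost-approximator} together with Hölder's inequality. Let $s := \mathrm{sign}(Pd)$, so that $\norm{s}_\infty \le 1$ and $\inner{Pd, s} = \norm{Pd}_1$.

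For strong duality, I would evaluate $\inner{d, \phi}$ explicitly:
\begin{align*}
  \inner{d, \phi(d)} = \inner{d, P^T s} = \inner{Pd, s} = \norm{Pd}_1 \ge \norm{d}_\opt,
\end{align*}
where the last inequality is the lower bound from the linear cost approximator property. Hence the true optimal flow $f^*_d$ routes $d$ and satisfies $\norm{W f^*_d}_1 = \norm{d}_\opt \le \inner{d, \phi}$, which certifies the existence of the (non-returned) primal witness required by the dual-only preconditioner definition.

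For approximate dual feasibility, I would argue edge-by-edge: fix an oriented edge $e = (u, v) \in \vec{E}$ and consider the unit demand $d_e := \mathbbm{1}_u - \mathbbm{1}_v$, which is proper and can be routed along $e$ at cost $w(e)$, so $\norm{d_e}_\opt \le w(e)$. The upper bound of the linear cost approximator then gives $\norm{P d_e}_1 \le \alpha \cdot w(e)$. Now compute
\begin{align*}
  (B^T \phi)_e = \phi(u) - \phi(v) = \inner{P(\mathbbm{1}_u - \mathbbm{1}_v), s} = \inner{Pd_e, s},
\end{align*}
and apply Hölder: $|(B^T \phi)_e| \le \norm{Pd_e}_1 \cdot \norm{s}_\infty \le \alpha \cdot w(e)$. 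Dividing by $w(e) = W_{e,e}$ and taking a max over edges yields $\norm{W^{-1} B^T \phi}_\infty \le \alpha$.

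There is no real obstacle here; the only minor bookkeeping concerns are the convention for $\mathrm{sign}$ at zero (any choice in $[-1,1]$ preserves both $\inner{Pd,s}=\norm{Pd}_1$ and $\norm{s}_\infty\le1$) and the sign of $B^T\phi$ being consistent with the chosen orientation of $\vec{E}$ (which drops out because we take $\norm{\cdot}_\infty$). The argument is essentially one application of Hölder plus one use of each direction of the linear cost approximator inequality.
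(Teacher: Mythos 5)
Your proposal is correct and follows essentially the same route as the paper: both verify strong duality via the identity $\inner{d,\phi(d)}=\inner{Pd,\mathrm{sign}(Pd)}=\norm{Pd}_1\ge\norm{d}_\opt$, and both establish approximate dual feasibility by testing $P$ on the unit edge demands $d_e$ to bound the columns of $PBW^{-1}$ in $\ell_1$-norm. The only cosmetic difference is that the paper packages that column bound as a subclaim of the form $\norm{yPBW^{-1}}_\infty\le\alpha$ for all $\norm{y}_\infty\le1$, whereas you apply Hölder edge-by-edge; these are the same argument.
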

\begin{proof}
  Let $G \cong (B, W)$ be the underlying graph. First, we show that the following subclaim about a linear-algebraic guarantee that characterizes $P$: we have that $\norm{y P B W^{-1}}_\infty \le \alpha$ over all $\norm{y}_\infty \le 1$. Specifically, for each oriented edge $\vec{e} \in \vec{E}$, consider how $P$ approximates the cost of routing a unit from the head to the tail of $\vec{e}$. Formally, we define the demand $d_{\vec{e}}$ to be $d_{\vec{e}}(x) := \1{x=s} - \1{x=t}$ for an edge $\vec{e} = (s, t) \in \vec{E}$. Clearly, $\norm{d_{\vec{e}}}_\opt \le w(e)$, hence it is necessary that $\norm{P d_{\vec{e}} w(e)^{-1}}_1 \le \alpha$. Furthermore, it is easy to see that the columns of $B$ are exactly $d_{\vec{e}}$ over all $\vec{e} \in \vec{E}$, hence each column of $P B W^{-1}$ has $\ell_1$-norm at most $\alpha$. This is equivalent to $\norm{y^T P B W^{-1}}_\infty \le \alpha$ over all $\norm{y}_\infty \le 1$. This proves the subclaim.

  We now prove the complete result. Let $y := \mathrm{sign}(P d)$ and $\phi(d) := \frac{1}{\alpha} P^T y$. Since, $\norm{d}_\opt \le \norm{P d}_1$, there must exists a flow $f$ satisfying $d$ such that $\norm{d}_\opt \le \norm{W f}_1 \le \norm{P d}_1$. We verify all properties \Cref{def:preconditioner}. (Dual feasibility) $\norm{W^{-1} B^T \phi(d)}_\infty = \frac{1}{\alpha} \norm{W^{-1} B^T P^T y}_\infty \le \frac{1}{\alpha} \cdot \alpha = 1$ via the subclaim. (Approximation guarantee) $\inner{d, \phi(d)} = \frac{1}{\alpha} \inner{P d, y} = \frac{1}{\alpha} \inner{P d, \mathrm{sign}(P d)} = \frac{1}{\alpha} \norm{P d}_1 \ge \frac{1}{\alpha} \norm{d}_\opt$. \qedhere
\end{proof}

Having a dual-only $\alpha$-approximator that can be evaluated in $M$ time, we construct (via \Cref{corSmallResidualRrror}) an $\tilde{O}(\eps^{-2} \alpha^2 \cdot M)$ time $(1+\eps)$-approximate algorithm for transshipment.
\begin{corollary}\label{corollary:linear-cost-approximator-gives-algorithms}
  Let $P$ be an $\alpha$-approximate linear cost approximator on a weighted graph $G$ and suppose that we can evaluate matrix-vector multiplications with $P$ and $P^T$ (and other basic vector operations) in $M$ time. Given any TS instance, there is an $\tilde{O}(\eps^{-2} \alpha^2 M)$-time algorithm that computes a $(1 + \eps)$-approximate primal-dual pair $(f, \phi)$ satisfying the properties listed in \Cref{corSmallResidualRrror}.
\end{corollary}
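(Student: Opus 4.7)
The plan is simply to compose the two preceding results: the theorem just proved converts an $\alpha$-approximate linear cost approximator $P$ into a dual-only $\alpha$-approximate preconditioner given explicitly by $\phi(d) = P^T \mathrm{sign}(Pd)$, and \Cref{corollary:dual-only-algorithm} boosts any such dual-only preconditioner into a $(1+\eps)$-approximate transshipment algorithm running for $\tilde O(\eps^{-2}\alpha^2)$ rounds, where each round performs $O(1)$ preconditioner queries and $O(1)$ basic vector operations. The task is therefore to bound the cost of a single round under the hypotheses of this corollary.

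First I would note that evaluating the dual-only preconditioner $\phi(d) = P^T \mathrm{sign}(Pd)$ on any demand $d$ consists of three steps: (i) one matrix-vector product $y \assign Pd$, (ii) a componentwise application of $\mathrm{sign}(\cdot)$, which is a basic vector operation, and (iii) one matrix-vector product $\phi \assign P^T y$. By the hypothesis of the corollary, each of the matrix-vector products with $P$ and $P^T$ costs $O(M)$ time, and the same bound $M$ is assumed to cover the basic vector operations, so a single preconditioner query costs $O(M)$ time.

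Next I would feed this preconditioner into \Cref{corollary:dual-only-algorithm}. That corollary guarantees termination after $\tilde O(\eps^{-2}\alpha^2)$ rounds; each round executes $O(1)$ preconditioner queries and $O(1)$ additional basic vector operations, which together take $O(M)$ time under our hypothesis. Multiplying out, the total running time is
\[
  \tilde O(\eps^{-2}\alpha^2) \cdot O(M) \;=\; \tilde O(\eps^{-2}\alpha^2 M),
\]
and the returned primal-dual pair $(f,\phi)$ inherits verbatim the approximation guarantees stated in \Cref{corollary:dual-only-algorithm}, namely $(1+\eps)^{-1}\norm{d}_\opt \le \inner{d,\phi} \le \norm{W f}_1 \le \norm{d}_\opt$ together with $\norm{d - Bf}_\opt \le \eps \norm{d}_\opt$.

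There is essentially no obstacle: the statement is a clean composition, and the only thing to be careful about is that $M$ is interpreted (as the hypothesis explicitly states) as an upper bound on the cost of both $P$/$P^T$ multiplications \emph{and} the generic basic vector operations invoked by the boosting loop, so the per-round cost of \Cref{corollary:dual-only-algorithm} is honestly $O(M)$ rather than $O(M)$ plus an unaccounted additive term.
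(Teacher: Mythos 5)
Your proposal is correct and matches the paper's argument exactly: the paper proves this corollary by the same one-line composition, converting $P$ into the dual-only preconditioner $\phi(d)=P^T\mathrm{sign}(Pd)$ (via the immediately preceding theorem) and then invoking \Cref{corollary:dual-only-algorithm}, with each of the $\tilde O(\eps^{-2}\alpha^2)$ rounds costing $O(M)$ time. The extra bookkeeping you add about the three-step evaluation of $\phi(d)$ and the meaning of $M$ is a harmless elaboration, not a divergence.
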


\textbf{Application: almost-optimal sequential TS~\cite{She17b}.} The goal is to construct $\eps^{-2} m^{1 + o(1)}$-time $(1+\eps)$-TS solver in the sequential setting. Following \Cref{corollary:linear-cost-approximator-gives-algorithms}, it is sufficient to construct a $n^{o(1)}$-approximate linear cost approximator $P$, which is accomplished as follows. Each vertex of a weighted graph $G$ is embedded into $\ell_1$ space of dimension $O(\log^2 n)$ with (worst-case) distortion $O(\log n)$ (via so-called Bourgain's embedding~\cite{Bou85} in $\tilde{O}(m)$ sequential time). Then, the dimension of the embedding is reduced to $d := O(\sqrt{\log n})$ via a simple Johnson-Lindenstrauss projection~\cite{dasgupta1999elementary}, increasing the distortion of the embedding to $\exp(O(d)) = n^{o(1)}$. Finally, the paper constructs a $O(\log^{1.5} n)$-approximate linear cost approximator in this (virtual) $\ell_1$ space of dimension $d$ that can be evaluated efficiently, leading to a $\exp(O(d)) \cdot O(\log^{1.5} n) = n^{o(1)}$-approximate linear cost approximator in $G$, which yields the result. \textbf{Approximator in $\ell_1$ space:} We give a short cursory description on how to construct the approximator $P$. Re-scale and round the $\ell_1$ space such that all coordinates are integral. Then, each point $x$ calculates the distance $c(x)$ to the closest point with all-even coordinates. Then, $x$ uniformly spreads its demand $d(x)$ among all points with all-even coordinates that are of distance exactly $c(x)$ to $x$. Finally, repeat the algorithm on points with all-even coordinates (delete other points, divide all coordinates by $2$). After $O(\log n)$ iterations, the entire remaining demand will be supported on $2^d$ vertices of the hypercube, which can be routed to a common vertex yielding a $O(d)$ approximation. It can be shown that the cost incurred by spreading the demand at any particular step $O(d)$-approximates the optimal solution, and that the optimal solution does not increase in-between two steps, leading to a $O(d \log n) = O(\log^{1.5} n)$-approximate linear cost approximator. \textbf{Efficiency:} Evaluating the approximator requires computing the demands at each step in the above algorithm. Evaluating even the first step requires $n 2^d$ time since each point $x$ sends its demand to (potentially) $2^d = n^{o(1)}$ closest all-even points. Therefore, the dimension of the embedding is reduced to $O(\sqrt{\log n})$. Moreover, the paper (implicitly) claims this approximator in $\ell_1$ can be evaluated in $m^{1 + o(1)}$ time. Finally, we remark that the approximator does not yield a flow in the original graph in any meaningful way, (i.e., it only approximates costs), confirming that it is dual-only. Together, we solve $(1+\eps)$-TS in $\eps^{-2} m^{1 + o(1)}$ time.
  
\textbf{Application: near-optimal TS in PRAM~\cite{Li20}.} The goal is to solve $(1+\eps)$-TS in $\tilde{O}(1)$ depth and $\tilde{O}(m)$ work in PRAM. The paper constructs an $\tilde{O}(1)$-approximate linear cost approximator $P$ with sparsity $\tilde{O}(m)$, meaning it can be evaluated in $\tilde{O}(1)$ depth and $\tilde{O}(m)$ work, which would yield the result. To do so, the paper follows \cite{She17b} by embedding $G$ in $\ell_1$ space with distortion $\tilde{O}(1)$ and dimension $d := \tilde{O}(1)$ and then uses the randomly shifted grids methods of \cite{indyk2003fast} to approximate the cost in this virtual space. \textbf{Approximator in $\ell_1$ space:} We define a \emph{randomly shifted grid} of scale $W$ to be the set $W (\Z^d + u) \subseteq \R^d$, where each coordinate of $u \in \R^d$ is uniformly drawn from $[0, 1)$ (i.e., one obtains a randomly shifted grid by taking all integral $d$-dimensional points, randomly translating them along each axis, them multiplying all coordinates by $W$). Initially, set $W \gets \tilde{O}(1)$. The routing works by sampling $s := \tilde{O}(1)$ randomly shifted grids of scale $W$ and, for each grid, each point $x$ sends $1/s$ of its demand $d(x)$ to the closest point in the grid. The scale $W$ is increased by a polylogarithmic factor and the algorithm is repeated for $O(\log n)$ steps until all demand is supported on a hypercube, at which point it can be $O(d)$-approximated by aggregating it at a single vertex. It can be shown that the cost incurred by routing the demand at any particular step $\tilde{O}(1)$-approximates the optimal solution, and that the optimal solution increases only by a multiplicative $1 + 1/\poly(\log n)$ factor, hence after $O(\log n)$ iterations we obtain an $\tilde{O}(1)$-approximate linear cost approximator $P$ that has sparsity $\tilde{O}(m)$. \textbf{Vertex reduction framework:} On its face, the above approach simply shows that in order to get $(1 + \eps)$-transshipment (and $(1 + \eps)$-shortest paths, as arduously shown in the paper), it is sufficient to find an $\tilde{O}(1)$-distortion $\ell_1$-embedding. However, to find an $\ell_1$-embedding, one needs $\tilde{O}(1)$-approx shortest paths (with some additional technical requirements concerning the violation of the triangle inequality). To resolve this cycle, the paper goes through the vertex reduction framework of \cite{Mad10,Pen16} which reduces the number of vertices by a polylogarithmic factor, recursively solves transshipment, lifts the solution to the original graph, and repairs it using the boosting framework. The details are out-of-scope.

\textbf{Future work.} The ideas used for solving transshipment have historically paralleled the ideas used for solving maximum flow problems. Adding to the connection between these two problems, approximate solutions to maximum flow can also be boosted in a similar way to transshipment~\cite{She13} via linear cost approximators (called \emph{congestion approximators}). However, no framework that can handle black-box approximators has been developed---creating such a framework would conceptually simplify the task of designing approximate maximum flow solutions. Furthermore, both transshipment and maximum flow are special cases of the so-called $\ell_p$-norm flow, which also seems to support boosting~\cite{adil2019iterative}. 




\bibliography{Refs}


\appendix

\section{A Simple Analysis of Multiplicative Weights (MW)}\label{sec:simple-mw-analysis}

In this section, we exhibit a particularly simple analysis of multiplicative weights. We first define a natural optimization task in \Cref{sec:canonical-optimization}, provide an algorithm and its analysis in \Cref{sec:analysis-MW}, and then use it to solve other tasks (like the Feasibility task from \Cref{table:feasbility-task-def}) in \Cref{sec:other-MW-forms}.

Our analysis forgoes the typical explanation that goes through the weighted majority (also known as the \emph{experts}) algorithm and accompanying ad-hoc calculations~\cite{AroraHK12}. Instead, we show how to relax an often-found (non-smooth) optimization task into a smooth one by replacing the (non-smooth) maximum with a well-known \emph{smooth max} (or log-sum-exp) function (defined in \Cref{def:smax}). Then, we show that multiplicative weights can be seen as an instance of Frank-Wolfe method~\cite{frank1956algorithm} adjusted to optimizing the smooth maximum function over a convex set by maintaining a dual over the probability simplex. Using well-known elementary properties of the smooth max, this approach yields a particularly simple analysis of the algorithm. While it is entirely possible that this perspective was known to experts in the area, the author is not aware of any write-up providing a similar analysis.

\subsection{Solving an optimization task using MW}\label{sec:canonical-optimization}

In this section, we define the so-called \emph{Canonical optimization task}, from which we will derive solutions to all other tasks.

\textbf{Notation.} We define $\norm{x}_{\max} = \max_i x_i$ to be the largest coordinate of a vector, an $\Delta_m := \{ x \in \R^m \mid x \ge 0, \sum_{i=1}^m x_i = 1 \}$ be the set of $m$-element probability distributions (the so-called probability simplex).

\begin{definition}
  Let $K$ be an arbitrary convex subspace $K \subseteq \R^n$, $A \in \R^{m \times n}$ be a matrix, and $b \in \R^n$ be a vector. We define the following \textbf{Canonical optimization task}:
  \begin{align*}
    \min_{x \in K} & \norm{Ax}_{\max} + \inner{b, x} .
  \end{align*}
  Given additionally a vector $p \in \Delta_m$ we define the accompanying \textbf{Linearized canonical (optimization) task}:
  \begin{align*}
    \min_{x \in K} \inner{p, Ax} + \inner{b, x} .
  \end{align*}
\end{definition}
Note that for each $x \in K$ we have $\inner{p, Ax} + \inner{b, x} \le \norm{Ax}_{\max} + \inner{b, x}$, hence the Linearized task is a relaxation of the optimization task.

Suppose we want to solve some fixed Canonical optimization task and assume we know how to solve the accompanying (typically much easier!) Canonical linearized task (for any $p$ and $\eps$) via some \emph{black-box Oracle}. Then, there exists a simple solver that computes a solution to the Canonical optimization task by repeatedly querying the Oracle with different values of $p \in \Delta_m$ (the accuracy $\eps$ stays fixed); the oracle is assumed to return a feasible solution $x \in K$ for each queried Linearized canonical task.

We define the \textbf{width of the Oracle} $\rho > 0$ to be (any upper bound on) the largest width of a solution $\norm{Ax}_{\infty}$ that can be returned by the Oracle, i.e., $\rho \ge \norm{A x}_{\infty}$ during the algorithm. Oracles with larger widths need to be queried more times, hence we aim to construct Oracles with their width being as small as possible. The solver is given in \Cref{algo:smooth-MW-strategy} and its properties are stated in \Cref{theorem:canonical-MW}.

\begin{algorithm}[h]
  \caption{Solver for the Canonical optimization task.}
  \label{algo:smooth-MW-strategy}
  \begin{enumerate}
  \item \textbf{Input:} Canonical optimization task ($A \in \R^{m \times n}, b \in \R^n, \gamma \in R$) and $\eps > 0$.
  \item \textbf{Definition:} $\nabla \smax_{\beta}(x) := \left( \frac{ \exp(\beta \cdot x_i) }{\sum_{j=1}^m \exp(\beta \cdot x_j)} \right)_{i=1}^m \in \mathbb{R}^m$ (See \Cref{sec:analysis-MW}).
  \item Initialize $x_* \gets \vec{0} \in \R^n$ and $\beta := \eps / (2 \rho^2)$.
  \item For $t = 1, \ldots, T$ rounds, where $T := 4\eps^{-2} \rho^2 \ln m$:
    \begin{enumerate}
    \item Let $p_t \gets [\nabla \smax_{\beta}] ( A x_* ) \in \Delta_m$.
    \item MW outputs $p_t \in \Delta_m$ to Oracle.
    \item Oracle returns a solution $x_t \in \mathbb{R}^n;\ \norm{A x_t}_\infty \le \rho$ to the Canonical linearized optimization task w.r.t. $p_t$.
    \item We update $x_* \gets x_* + x_t$.
    \end{enumerate}
  \item MW outputs $(1 / T) \cdot x_* \in \R^n$.
  \end{enumerate}
\end{algorithm}

  

\begin{theorem}\label{theorem:canonical-MW}
  Let $(A, b)$ be a Canonical optimization task and fix $\eps > 0$. Suppose we have access to an Oracle that will solve the accompanying Canonical linearized task specified by $(A, b, p, \eps)$ for any $p \in \Delta_m$. Then, \Cref{algo:smooth-MW-strategy} constructs a solution $x' \in K$ satisfying $\norm{Ax'}_{max} + \inner{b, x'} \le \min_{x} \norm{Ax}_{max} + \inner{b, x} + \eps$. During the construction, \Cref{algo:smooth-MW-strategy} queries the Oracle at most $4 \eps^{-2} \rho^2 \ln (2m)$ times. Here, $\rho > 0$ is the width of the Oracle.
\end{theorem}


\subsection{Analysis of the canonical MW algorithm}\label{sec:analysis-MW}

On a high-level, we will solve the Canonical optimization task by relaxing it to the so-called \emph{Smooth optimization task} by replacing the $\max$ with the so-called \emph{smooth maximum} $\smax_\beta$. We introduce the $\smax$ function and state its properties.
\begin{fact}\label{def:smax}
  We define $\smax_\beta : \R^m \to \R$ as
  $$\smax_\beta(x) = \frac{1}{\beta} \ln\left( \sum_{i=1}^m \exp(\beta x_i) \right),$$
  where $\beta > 0$ is some \emph{accuracy} parameter (increasing $\beta$ increases accuracy but decreases smoothness). The following properties holds:

  \begin{enumerate}
  \item \label{prop:smax-max} The maximum is approximated by $\smax$:
    $$\smax_{\beta}(x) \in \left[ \norm{x}_{\max}, \norm{x}_{\max} + \frac{\ln n}{\beta} \right] .$$
  \item \label{prop:smax-grad-distrib} The gradient of $\smax$ is some probability distribution over $[n]$:
    $$\nabla\! \smax_{\beta}(x) = ( \frac{1}{Z} \exp(\beta \cdot x_i) )_{i=1}^m \in \Delta_m,$$ where $Z := \sum_{i=1}^n 
    \exp(\beta \cdot x_i)$ is the normalization factor.
  \item \label{prop:smax-smooth} $\smax_{\beta}$ is convex and $\beta$-smooth with respect to $\norm{\cdot}_\infty$:
    $$\smax_\beta(x + h) \le \smax_\beta(x) + \inner{\nabla\! \smax_{\beta}(x), h} + \beta\cdot \norm{h}_\infty^2$$
  \item \label{prop:smax-zero} $\smax_{\beta}( \vec{0} ) = \frac{\ln m}{\beta}$.
  \end{enumerate}
\end{fact}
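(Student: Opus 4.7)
The plan is to verify each of the four properties by direct computation from the definition of $\smax_\beta$, since each is a well-known property of the log-sum-exp function. None of the steps should require a deep idea, but Property 3 deserves some care.

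For Property 1, I would note that $\exp(\beta \norm{x}_{\max}) \le \sum_i \exp(\beta x_i) \le m \exp(\beta \norm{x}_{\max})$ and take $\frac{1}{\beta}\ln(\cdot)$ of both sides. Property 4 is immediate by plugging $x = \vec{0}$ into the definition. For Property 2, I would differentiate $\smax_\beta$ coordinatewise: $\frac{\partial}{\partial x_i}\smax_\beta(x) = \frac{\exp(\beta x_i)}{\sum_j \exp(\beta x_j)}$, and then observe that these are nonnegative and sum to $1$, so $\nabla \smax_\beta(x) \in \Delta_m$.

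The most substantive step is Property 3, the smoothness bound. The plan is to compute the Hessian explicitly: writing $p := \nabla \smax_\beta(x)$, a short calculation gives $\nabla^2 \smax_\beta(x) = \beta\,(\diag(p) - pp^\top)$. Convexity then follows because for any vector $h$,
\[
h^\top \nabla^2 \smax_\beta(x)\, h = \beta\Bigl(\sum_i p_i h_i^2 - \Bigl(\sum_i p_i h_i\Bigr)^{\!2}\Bigr) = \beta \cdot \Var_{i \sim p}[h_i] \ge 0.
\]
For the smoothness bound itself, I would upper-bound the variance by the second moment, $\Var_{i \sim p}[h_i] \le \E_{i \sim p}[h_i^2] \le \norm{h}_\infty^2$, giving $h^\top \nabla^2 \smax_\beta(x)\, h \le \beta \norm{h}_\infty^2$ uniformly in $x$. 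A second-order Taylor expansion (with integral remainder) then yields
\[
\smax_\beta(x+h) \le \smax_\beta(x) + \inner{\nabla \smax_\beta(x), h} + \tfrac{\beta}{2}\norm{h}_\infty^2,
\]
which implies the stated bound (the paper's constant $\beta$ is a factor of $2$ looser than the sharp $\beta/2$, so the inequality holds with room to spare).

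The only mild obstacle is the Hessian computation and the identification of its quadratic form with a variance, but once that is done the bound $\Var \le \E[h_i^2] \le \norm{h}_\infty^2$ makes the dependence on $\norm{h}_\infty$ (rather than $\norm{h}_2$) transparent. This is precisely the feature that makes $\smax_\beta$ smooth with respect to $\norm{\cdot}_\infty$ rather than only $\norm{\cdot}_2$, which is what the later Frank--Wolfe-style analysis of MW will require.
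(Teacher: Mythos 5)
Your proof is correct and follows the same Hessian-based route that the paper itself sketches for Property 3 (the only item with any content); Properties 1, 2, and 4 are the usual one-line checks. One worthwhile remark: the paper's parenthetical hint says Property 3 is ``equivalent'' to the Hessian bound $0 \le h^\top(\nabla^2\smax_\beta)h \le 2\beta\inner{h,h}$, but since $\inner{h,h}=\norm{h}_2^2 \ge \norm{h}_\infty^2$ this only certifies $\ell_2$-smoothness and does not by itself give the $\ell_\infty$-smoothness actually claimed; your computation --- identifying $h^\top\nabla^2\smax_\beta\,h$ with $\beta\,\Var_{i\sim p}[h_i]$ and bounding the variance by the second moment $\E_{i\sim p}[h_i^2]\le\norm{h}_\infty^2$ --- is the correct and sharper version, and the $\norm{\cdot}_\infty$ dependence is exactly what the width-based round bound in the MW analysis needs.
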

The stated properties of $\smax_\beta$ are elementary and can be directly verified (e.g., see \cite{She13,BKKL17}). For instance, Property~\ref{prop:smax-smooth} is equivalent to verifying that the Hessian $\nabla^2 \smax_\beta$ satisfies $0 \le x^T (\nabla^2 \smax_\beta) x \le 2 \beta \inner{x, x}$ for all $x \in \R^m$.

We are now ready to introduce the \emph{Smooth optimization task} and its linearization.
\begin{table}[h]
  \begin{tabular}{l r r}
    \textbf{Smooth optimization task}: & $\min_{x \in K}$ & ${\color{blue}\smax}_\beta(Ax) + \inner{b, x}$ \\
    \textbf{Linearized smooth task}, given $x_* \in K$: & $\min_{x \in K}$ & $\inner{\nabla [\smax_{\beta}(Ax_*)], Ax} + \inner{b, x}$
  \end{tabular}
\end{table}
We first note that solving the Smooth optimization task is harder than solving the Canonical optimization task since $\smax_\beta(Ax) + \inner{b, x} \ge \norm{Ax}_{\max} + \inner{b, x}$. Furthermore, it uses only smooth functions, hence we can use tools from calculus to analyze its value. It is important to note that the Linearized smooth task is exactly the Linearized canonical task after substituting $p \gets [\nabla \smax_\beta](A x_*) \in \Delta_m$ (i.e., the gradient of $\smax_{\beta}$, evaluated at $A x_*$).

We now prove the efficacy of \Cref{algo:smooth-MW-strategy} for its ability to solve the Canonical optimization task.
\begin{proof}[Proof of \Cref{theorem:canonical-MW}]
  We track the objective value of the Smooth optimization task via $\Phi(x) := \smax_\beta( A x ) + \inner{b, x}$. We remind the reader that $\Phi(x)$ is a pessimistic estimator of the Canonical optimization task, hence bounding $\Phi(x)$ suffices for the canonical task. For future reference, note that $\nabla \Phi(x) = A^T [\nabla \smax_{\beta}](A x) + b$.

  In the $i^{th}$ step, let $x_{-} = \sum_{i=1}^{t-1} x_i$ be $x$ at the start of the step, and let $x_{+} = x_{-} + x_t$ be $x$ after the step's update. In each step we have:
  \begin{align*}
    \Phi(x_{+}) & = \Phi(x_{-} + x_t) = \smax_\beta(A (x_{-} + x_t)) + \inner{b, x_{-} + x_t} \\
                & \le \Phi(x_{-}) + \inner{\nabla \Phi(x_{-}), x_t} + \beta \norm{A x_t}_\infty^2 + \inner{b, x_t} && \text{(\Cref{def:smax}.\ref{prop:smax-smooth})} \\
                & = \Phi(x_{-}) + \left\{\inner{A^T [\nabla \smax_{\beta}](Ax_{-}), x_t} + \inner{b, x_t}\right\} + \beta \norm{A x_t}_\infty^2\\
                & = \Phi(x_{-}) + \left\{\inner{p_t, A x_t} + \inner{b, x_t}\right\} + \beta \norm{A x_t}_\infty^2\\
                & = \Phi(x_{-}) + \mathrm{LinearizedTaskValue}_t + \eps/2
  \end{align*}

  By assumption, the value of the smooth Linearized task was always at most $\mu^* := \max_{t=1}^T \mu_t$. Therefore, applying the above single-step analysis for $T$ steps, we get that the final value $x_*$ satisfies $\Phi(x_*) \le \Phi(\vec{0}) + \sum_{t=1}^T ( \mu^* + \eps/2 ) = \frac{\ln m}{\beta} + T \cdot (\mu^* + \eps/2) \le T \cdot (\mu^* + \eps)$. The last inequality holds when $T \ge \frac{\ln m}{(\eps/2) \beta} = 4 \eps^{-2} \rho^2 \ln m$ and we have that $\frac{\ln m}{\beta} \le T \cdot (\eps/2)$.

  The algorithm's output $(1/T) \cdot x_* \in K$ since it can be written as $(1/T) \sum_{i=1}^T x_t$, an average of $T$ vectors in $K$. Furthermore, since $\norm{A x_*}_{\max} + \inner{b, x_*} \le \Phi(x_*) \le T \cdot (\mu^* + \eps)$, we have that $\norm{A x_*/T}_{\max} + \inner{b, x_*/T} \le \mu^* + \eps$, as required.
\end{proof}

\subsection{Deriving other forms of the MW algorithm}\label{sec:other-MW-forms}

In this section, we use the Canonical optimization task to solve other tasks, namely, the Feasibility task (\Cref{table:feasbility-task-def}) and provide pseudocode for the TS Feasibility task (\Cref{table:ts-feasibility-task}).

\textbf{Solving the feasbility task.}
The main idea of the derivation is that we can convert between $\norm{\cdot}_\infty$ and $\norm{\cdot}_{\max}$ via the following identity: $\norm{Ax}_\infty = \norm{\twovec{A}{-A} x}_{\max}$, which enables us to leverage \Cref{theorem:canonical-MW} to prove \Cref{theorem:MW-feasibility}. We first state Algorithm~\ref{algo:MW-strategy-feasibility} and then prove the result.


\begin{proof}[Proof of \Cref{theorem:MW-feasibility} and Algorithm~\ref{algo:MW-strategy-feasibility}]
  We apply \Cref{theorem:canonical-MW} to the Canonical optimization task $\norm{\twovec{A}{-A} x}_{\max} + \inner{b, x}$, which is paired up with the the linearized canonical task
  $$\inner{\twovec{p_1}{p_2}, \twovec{A}{-A}x} + \inner{b, x} = \inner{p_1 - p_2, A x} + \inner{b, x}.$$
  Note that $\twovec{p_1}{p_2} \in \Delta_{2m}$ implies $\norm{p_1 - p_2}_1 \le 1$. 

  We directly obtain an algorithm in which the Oracle, upon being given $p' := p_1 - p_2$ (with $\norm{p'}_1 \le 1$), either returns a solution $x \in \R^n$ satisfying $\inner{p', Ax} + \inner{b, x} \le \gamma - \eps$ (i.e., the Linearized task from \Cref{theorem:MW-feasibility}), or we say the Oracle fails. If the oracle never fails, MW computes a solution $x_* \in \R^n$ satisfying $\norm{Ax}_{\max} + \inner{b, x} \le \gamma$, which provides a solution for the Feasibility task, as required.

  Finally, the width $\rho_{Thm.\ref{theorem:canonical-MW}}$ of the Oracle is the maximum $\norm{[Ax, -Ax]}_{\max}$ that can ever be returned. Therefore, we can assign $\rho := \rho_{Thm.\ref{theorem:canonical-MW}}$ as $\norm{[Ax, -Ax]}_{\max} = \norm{A x}_\infty$, hence the same value $\rho$ is an upper bound on the width with respect to \Cref{theorem:MW-feasibility}. Therefore, the number of rounds of the game is $4 \eps^{-2} \rho^2 \ln (2m)$, as required.

  Finally, we confirm that Algorithm~\ref{algo:MW-strategy-feasibility} is simply Algorithm~\ref{algo:smooth-MW-strategy} with the correct substitutions and with a manual computation of the gradient $\nabla \smax$.
\end{proof}

\textbf{Pseudocode for TS Feasibility task (i.e., the boosting algorithm).} One can easily verify that combining the proof of \Cref{lemma:primal-dual-boosting-to-eps} with Algorithm~\ref{algo:MW-strategy-feasibility}, yields the following Algorithm~\ref{algo:boosting-strategy} for boosting a transshipment solution.

\begin{restatable}{algorithm}{algoBoostingStrategy}
  \caption{Boosting a dual-only $\alpha$-approximator.}
  \label{algo:boosting-strategy}
  \begin{enumerate}
  \item \textbf{Input:} transshipment instance ($B, W, d$), current guess $g > 0$, $\eps > 0$, and an $\alpha$-approximator.
  \item Initialize $\phi_* \gets \vec{0} \in \R^n$ and $\beta := \eps / (2 \rho^2)$.
  \item For $t = 1, \ldots, T$ rounds, where $T := 4 \eps^{-2} \alpha^2 \ln (2m)$:
    \begin{enumerate} 
    \item Let $q \gets \begin{bmatrix} W^{-1} B^T \phi_* \\ - W^{-1} B^{T} \phi_* \end{bmatrix} \in \R^{2m}$.
    \item Let $q'_i \gets \exp(\beta q_i)_i$ for $i \in [2m]$.
    \item Let $p_t \gets (1 / \sum_i q'_i)(q'_i - q'_{i+m})$.
    \item Let $f_t \gets g \cdot W^{-1} p_t \in \R^{\vec{E}}$ be a flow with cost $\norm{W f_t}_1 \le g$.
    \item Algorithm queries the approximator with the demand $d_{\mathrm{residual}} \gets d - B f_t$.
    \item Approximator finds $\phi\residual$ with either:
      \begin{enumerate}
      \item $\inner{d\residual, \alpha \cdot \phi\residual} \ge \eps \cdot g$ (in which case we continue), or,
      \item (otherwise) we stop and output $f_t$; this guarantees $\norm{W f_t}_1 \le g \norm{p}_1 = g$ and $\norm{d - B f_t}_\opt \le \norm{d\residual}_\opt \le \eps g$.
      \end{enumerate}
    \item Update $\phi_* \gets \phi_* + \alpha \cdot \phi\residual$.
    \end{enumerate}
  \item Output potentials $(1 / T) \cdot \phi_* \in \R^V$. The output satisfies $\inner{d, (1/T) \phi_*} \ge g$.
  \end{enumerate}
\end{restatable}

\section{Reducing the Residual Error}\label{sec:reducing-residual-error}

A downside of using dual-only approximators is that the primal solution $f$ returned by \Cref{lemma:primal-dual-boosting-to-eps} is not feasible. It does not perfectly satisfy $Bf = d$, but merely that that the residual flow $d - Bf$ can be routed with small cost, i.e., $\norm{d - Bf}_\opt \le \norm{d}_\opt$. Following \cite{She17b}, the issue can be partially ameliorated by applying the same procedure $\tilde{O}(1)$ times: in each step we route the residual demand of the previous step and combine the outputs together. This has the effect of reducing the cost of routing the residual demand to an $(n^{-C})$-fraction of the original (for any constant $C > 0$), as shown by the result below.

\corSmallResidualRrror*
\begin{proof}
  \Cref{lemma:primal-dual-boosting-to-eps} provides the following. Given a demand $d$ we can obtain a primal-dual pair $(f, \phi)$ such that $(1 + \eps)^{-1} \cdot \norm{d}_\opt \le \inner{d, \phi} \le \norm{W f}_1 \le \norm{d}_\opt$, and $\norm{d - B f}_\opt \le \eps \norm{d}_\opt$.

  Let $d_0 := d$. For $i \in \{0, 1, \ldots, T\}$ where $T := \tilde{O}(C)$, we apply \Cref{lemma:primal-dual-boosting-to-eps} on the demand $d_i$ to get $(f_{i}, \phi_{i})$ such that $\norm{d_{i+1}}_\opt \le \eps/2 \cdot \norm{d_i}_\opt$, where $d_{i+1} := d_i - B f_{i}$. We have that $\norm{d_i}_\opt \le (\eps/2)^i \norm{d}_\opt$, hence $\norm{W f_i}_1 \le (\eps/2)^i \norm{d}_\opt$.

  Let $f_* := f_0 + f_1 + \ldots + f_T$ and $\phi_* = \phi_0$. It immediately follows that $\phi_*$ is feasible and satisfies the required conditions.

  We now verify that the cost of $f_*$ is $(1 + \eps)$-approximate. We have that
  \begin{align*}
    \norm{W f_*}_1 \le \sum_{i=0}^T \norm{W f_i}_1 \le \norm{d}_\opt \cdot(1 + \eps/2 + (\eps/2)^2 + \ldots + (\eps/2)^T) \le (1 + \eps) \norm{d}_\opt .
  \end{align*}

  Finally, we verify the cost of routing the residual demand. First, $d - d_{T+1} = \sum_{i=0}^T (d_{i} - d_{i+1}) = \sum_{i=0}^T B f_i = B f_*$, hence $\norm{d - B f_*}_\opt = \norm{d_{T+1}} \le (\eps/2)^{T+1} \norm{d}_\opt \le n^{-C} \norm{d}_\opt$. This completes the proof.
\end{proof}

This reduction is often sufficient to recover a feasible $(1+\eps)$-approximate primal solution by using some trivial $\poly(n)$-approximate way to route the residual demand. For instance, routing along the minimum spanning tree (MST) yields an $n$-approximation to $\norm{d}_\opt$. Therefore, finding a residual demand $d'$ with $\norm{d'}_\opt \le n^{-2} \norm{d}_\opt$ and routing $d'$ along the MST yields the flow $f$ of cost $(1 + \eps)\norm{d}_\opt + n \cdot n^{-2} \norm{d}_\opt \le (1 + 2 \eps)\norm{d}_\opt$~\cite{She17b}.
\end{document}